\documentclass[11pt,a4paper]{article}

\usepackage{amsmath, amssymb, amsthm}
\usepackage{geometry}
\usepackage{hyperref}
\usepackage{microtype}
\usepackage{enumitem}
\usepackage{booktabs}
\usepackage{graphicx}
\usepackage{float}
\usepackage{enumitem}

\newtheorem{theorem}{Theorem}[section]

\newtheorem{definition}[theorem]{Definition}
\newtheorem{remark}[theorem]{Remark}

\title{Analytic patch trees: branch interface inheritance and fractal dimension fields}
\author{H.\ Mulder\\
\small\texttt{mulder@skystrategy.com}}
\date{2026}

\begin{document}

\maketitle

\begin{abstract}
The extension of the analytic fractal curve trees of~\cite{mulder2026}
to analytic surface patch trees reveals a new
geometric structure: branch points are replaced by interface curves
that transmit the full analytical state of parent patches to their
children. These interfaces prove to be central in 
determining the topology of the surface patch trees, including 
for the conditions for self-similarity of the interfaces, the patches
and thus the trees.

We establish the analytic conditions for the integrability and well-posedness
of the surface patch trees and introduce further restrictions for 
conformality. We demonstrate that patch trees have a natural foliation that
slices the trees into one dimensional curve trees, each of which has their
own Hausdorff dimension, jointly creating a smooth dimension field.

We extend the two dimensional surface model to arbitrary dimensions $n$ 
where $n-1$ interface manifolds transport the $n$ field state of the parent
patches to their child branches. We note that the balance or discrepancy between
patch field dimension and the dimensions in which the branches may evolve, 
determine the analytical regime from essentially geometrical to essentially
operational.
\end{abstract}

\bigskip

\noindent\textbf{MSC2020:}
Primary 28A80;
Secondary 30C65, 53A05, 28A78.

\medskip

\noindent\textbf{Keywords:}
fractal geometry,
self-similarity,
conformal mappings,
recursive geometry,
Hausdorff dimension,
interface evolution,
analytic fractal trees,
patch trees.

\section{Introduction}
\label{sec:introduction}

In~\cite{mulder2026} we introduced analytic generator fields for fractal curve trees, showing 
that recursive fractal geometries can be generated through smooth field evolution 
rather than discrete geometric substitutions. The present work extends this framework 
from curves to surface patches. In doing so, branch points are upgraded to branch 
interfaces that transmit entire fields from parent to child patches. The resulting 
recursive geometry is governed not solely by the generator fields, but by the interaction 
between those fields and the evolving inherited interfaces. This leads naturally to the 
notion of an interface evolution operator, which becomes the organising principle for 
understanding and classifying patch trees.

A surface patch is specified by a pair $(V,\Gamma_0)$ consisting of generator 
fields $V=(V_1,V_2)$ and an analytic base interface $\Gamma_0$. When the generator fields 
satisfy the Frobenius integrability condition, they determine a unique analytic patch 
for any compatible inherited interface. 

This paper makes several contributions:
\begin{itemize}
    \item \textbf{Foundational structure:} We establish the analytic well-posedness of 
    surface patch trees via the Frobenius theorem and show that every such tree is 
    foliated by a smooth one-parameter family of fractal curve trees, each carrying 
    its own Hausdorff dimension. This yields a smooth dimension field across the patch 
    tree.
    \item \textbf{Interface theory:} We introduce branch interfaces as first-class 
    geometric objects. The interface evolution operator $E:\Gamma_0 \mapsto \Gamma_1$ 
    organises patch trees into geometric classes (interface preserving or modifying, 
    conformal, and self-similar) according to the behaviour of inherited interfaces under recursion.
    \item \textbf{Conformality and self-similarity:} Conformal generator fields satisfy 
    the Cauchy--Riemann equations and preserve angles. Within this family, constant-gradient 
    conformal fields generate the canonical self-similar conformal patch tree, in which 
    self-similarity emerges intrinsically from the interaction between the conformal 
    structure and interface evolution rather than through externally imposed transformations.
    \item \textbf{Higher-dimensional extension:} The patch tree framework generalises 
    naturally to arbitrary dimension. Patches of dimension $d_p$ are connected by 
    interfaces of dimension $d_p-1$ and extend the concepts of realisation, inheritance, 
    conformality, and self-similarity. The relative scaling between patch dimension and 
    branching (tree) dimension gives rise to qualitatively different regimes: geometric 
    versus operational.
\end{itemize}

The remainder of the paper is organised as follows. Section 2 introduces analytic surface 
patches, the generator system, Frobenius integrability, and the foliation by fractal 
curve trees. Section 3 develops the theory of branch interfaces and the interface evolution 
operator. Section 4 presents a hierarchy of patch tree classes of increasing geometric richness. 
Section 5 outlines the general higher-dimensional framework, and Section 6 concludes.

\section{Analytic surface patches and patch trees}
\label{sec:framework}

To extend fractal trees from curves to surfaces we use surface patches
and stitch them together along base and tip edges referred to as branch interfaces.
These interfaces are curve segments that are the equivalent of branch points 
in the curve tree framework. The surface patch tree, or patch tree for short, 
is entirely defined by the dynamics of generator fields of the patch as well 
as its base and tip interfaces.

\paragraph{The patch.}
Let $D = [0,W]\times[0,L]$ be the generator domain, with $s_1 \in [0,W]$
running across the patch and $s_2 \in [0,L]$ running into the patch interior
with base and tip at $s_2=0$ and $s_2=L$ respectively.
The generator state is $X: D \to \mathbb{R}^n$, whose components $x_n$ may
encode progression rates, orientation fields, curvature fields, as well as 
non-geometrical quantities. It evolves according to a pair of analytic
generator fields $V_1, V_2 : D \times \mathbb{R}^n \to \mathbb{R}^n$,
viewed as maps on the extended space $(s_1,s_2,X)$:
\begin{equation}
    \frac{\partial X}{\partial s_1} = V_1(s_1,s_2,X), \qquad
    \frac{\partial X}{\partial s_2} = V_2(s_1,s_2,X),
    \label{eq:generator}
\end{equation}
with analytic base interface $\Gamma_0 = \gamma(\cdot,0)$ specifying the bounday conditions. 
The pair $(V, \Gamma_0)$ fully specifies the patch and the constructed patch tree, assuming 
the following theorem is satisfied.

\begin{theorem}[Integrability and well-posedness]
\label{thm:integrability}
Let $V_1$ and $V_2$ be analytic. The system~\eqref{eq:generator}
admits a unique analytic solution on a neighbourhood of every point
of $D$ if and only if
\begin{equation}
    \frac{\partial V_1}{\partial s_2}
    + (V_2 \cdot \nabla_X)V_1
    =
    \frac{\partial V_2}{\partial s_1}
    + (V_1 \cdot \nabla_X)V_2.
    \label{eq:integrability}
\end{equation}
If the solution remains finite, it extends analytically to all of
$D$.
\end{theorem}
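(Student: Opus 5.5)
We prove the statement as an instance of the Frobenius theorem for the two-dimensional distribution spanned by the lifted vector fields on the extended space $(s_1,s_2,X)\in D\times\mathbb{R}^n$:
\[
\mathcal{V}_1=\partial_{s_1}+V_1\cdot\nabla_X,\qquad \mathcal{V}_2=\partial_{s_2}+V_2\cdot\nabla_X .
\]
A solution $X$ of~\eqref{eq:generator} is exactly a function whose graph $\{(s,X(s))\}$ is an integral surface of $\mathrm{span}\{\mathcal{V}_1,\mathcal{V}_2\}$. A direct computation gives
\[
[\mathcal{V}_1,\mathcal{V}_2]=\Bigl(\partial_{s_1}V_2+(V_1\cdot\nabla_X)V_2-\partial_{s_2}V_1-(V_2\cdot\nabla_X)V_1\Bigr)\cdot\nabla_X .
\]
This bracket has no $\partial_{s_i}$ component. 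It therefore lies in the span of $\mathcal{V}_1,\mathcal{V}_2$ if and only if it vanishes, that is, if and only if~\eqref{eq:integrability} holds.

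\textbf{Necessity.} Suppose a solution exists through every initial point $(s^0,X^0)$ near a given point. Differentiate $\partial_{s_1}X=V_1$ in $s_2$ and $\partial_{s_2}X=V_2$ in $s_1$, and use the chain rule and equality of mixed partials of the analytic $X$. At $(s^0,X(s^0))$ this gives~\eqref{eq:integrability}. Because the initial value $X^0$ is arbitrary, the identity holds on an open set of the extended space, and by analyticity it holds everywhere. Here "admits a solution" must be read as "for arbitrary initial data $X(s^0)=X^0$", which is the natural reading since the base interface $\Gamma_0$ supplies those data. Without this reading the identity would only hold along a single graph.

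\textbf{Sufficiency.} I would construct the solution directly rather than only cite Frobenius, because this also gives uniqueness and the extension.
\begin{enumerate}[label=(\roman*)]
\item Fix $s^0$ and $X^0$. Solve the analytic ODE $\partial_{s_1}Y=V_1(s_1,s_2^0,Y)$ along $s_2=s_2^0$ with $Y(s_1^0)=X^0$.
\item For each $s_1$, solve $\partial_{s_2}X=V_2(s_1,s_2,X)$ with initial value $Y(s_1)$. By Cauchy--Kovalevskaya, or by analytic dependence of ODE solutions on parameters, $X$ is analytic near $s^0$, and $\partial_{s_2}X=V_2$ holds by construction.
\item Set $R=\partial_{s_1}X-V_1(s,X)$. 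Differentiating in $s_2$ and using~\eqref{eq:integrability} gives the linear ODE
\[
\partial_{s_2}R=(\nabla_XV_2)\,R,
\]
with $R=0$ at $s_2=s_2^0$ by step (i). Hence $R\equiv0$.
\end{enumerate}
Uniqueness follows because any solution restricted to these two coordinate paths satisfies the same ODEs, so ODE uniqueness applies.

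\textbf{Extension to all of $D$.} This is the step I expect to be the main obstacle. The construction above extends along the two-segment paths in the convex rectangle $D$ for as long as the ODE solutions exist. The finiteness hypothesis (the solution stays in a compact set) rules out blow-up, so the ODE continuation theorem applies. Local uniqueness makes solutions on overlapping neighbourhoods agree. Since $D$ is simply connected there is no monodromy, and the local solutions patch into a single analytic solution on $D$. The delicate point is stating "remains finite" precisely enough to obtain uniform existence times on the set covered by the rectangle.
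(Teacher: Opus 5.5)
Your proof is correct. It rests on the same identification as the paper's proof: \eqref{eq:integrability} is the involutivity (Frobenius) condition for the lifted fields $\partial_{s_i}+V_i\cdot\nabla_X$. Their bracket is vertical, so it lies in their span only if it vanishes.

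Where you differ is in how existence is obtained. The paper's proof is a three-line sketch that attributes existence and uniqueness to Cauchy--Kowalevski. Applied directly, that theorem solves only one of the two equations, for instance $\partial_{s_2}X=V_2$ from data on $s_2=0$, which is really an ODE in the parameter $s_1$. It does not show that $\partial_{s_1}X=V_1$ also holds, and it is purely local. Your two-step ODE construction, together with the residual equation $\partial_{s_2}R=(\nabla_XV_2)R$, supplies exactly that compatibility step. It also gives uniqueness by plain ODE uniqueness and analyticity by analytic dependence of the flow on data and parameters. So your argument is self-contained and better suited to an overdetermined system than the paper's citation.

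Your necessity argument makes explicit something the statement leaves ambiguous: the ``only if'' direction needs solutions for arbitrary initial data. For example, with $n=1$, $V_1=X$, $V_2=s_1X$, the zero function is the unique analytic solution through $X^0=0$, yet \eqref{eq:integrability} reduces to $X=0$ and fails off that graph.

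The paper gives no argument for the extension claim. Yours works, and it is simpler than you fear, because your construction is already global on the rectangle. You solve along $s_2=0$ over $[0,W]$, then along each vertical line over $[0,L]$, and apply the residual argument line by line. No patching or monodromy argument is needed. Boundedness of each ODE solution plus the standard continuation theorem suffices, since the $V_i$ are defined on all of $D\times\mathbb{R}^n$, and no uniform existence time is required. Analyticity on all of $D$ follows because the flow is analytic on its whole open domain of definition.
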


\begin{proof}
The solution requires path-independence of integration in
$(s_1,s_2)$. The closure condition for every coordinate rectangle
is precisely~\eqref{eq:integrability}, the Frobenius condition
for the associated lifted vector fields~\cite{lee2013}. Existence
and uniqueness of the analytic solution follow from the
Cauchy--Kowalevski theorem~\cite{john1982}.
\end{proof}

To move from state space to an embedded geometric realisation of a single patch, 
we perform the following projection.

\paragraph{Geometric realisation.}
Let $\Pi : \mathbb{R}^n \to \mathbb{R}^e$ be the projection of selected
components of $X$, including geometric coordinates and auxiliary quantities 
as needed. The \emph{realised patch} is the
map
\begin{equation}
    \gamma : D \to \mathbb{R}^e, \qquad
    \gamma(s_1,s_2) = \Pi\bigl(X(s_1,s_2)\bigr),
    \label{eq:realization}
\end{equation}
where $X$ is the unique analytic solution of~\eqref{eq:generator} and 
$\mathbb{R}^e$ is the embedding space. 

We are now in a position to assemble the patch tree.
\paragraph{The patch tree.}
A \emph{surface patch tree} is a recursively generated collection
of patches in which the tip interface of each recursive parent patch
branches into $N \geq 1$ child patches, each child inheriting the parent
state $X$ at the branch interface $\Gamma_L$ and their field generator $V$. The realised 
geometric structure is the embedded union
\begin{equation}
    \mathcal{T} = \bigcup_{P \in \mathcal{P}} \gamma_P(D_P)
    \;\subset\; \mathbb{R}^2,
    \label{eq:patch_union}
\end{equation}
where $\mathcal{P}$ is the index set of all patches and $D_P$ is
the generator domain of patch $P$. 

\begin{definition}[Analytic patch tree]
An \emph{analytic patch tree} is a patch tree in which

\begin{enumerate}
\item every patch is generated by an analytic generator system
      satisfying the Frobenius compatibility condition
      \eqref{eq:integrability};

\item every inherited interface is an analytic manifold;

\item every interface transport map
      \(T^{(k)} : \mathbb{R}^n \rightarrow \mathbb{R}^n\)
      is analytic; and

\item the inherited child state
      \[
      X^{(k)}_c(s_1,0)
      =
      T^{(k)}
      \bigl(
      X_p(s_1,L)
      \bigr)
      \]
      defines admissible boundary data for the child generator
      system, so that the Frobenius compatibility condition remains
      satisfied and the child patch admits a unique analytic
      realisation.
\end{enumerate}

Thus every patch in the tree possesses a unique analytic realisation,
and analyticity is preserved under recursive inheritance.
\end{definition}

The patch tree may be viewed as an assembly of curve trees, each generated from a point
on the base interface at the root of the tree. Each of these curve trees, 
in the contractive regime, tends to a well-defined fractal limit set, its attractor.

\begin{theorem}[Foliation by fractal curve trees]
\label{prop:foliation}
Let $A \in (0,1)$ denote the longitudinal contraction factor
accumulated across the patch depth $L$, and let $\rho_0(c)$ be the
progression rate at $s_1=c$ along the base interface
$\Gamma_0 = \gamma(\cdot,0)$. Assume that
\[
    r(c) := \rho_0(c)\,A^L
\]
satisfies $0 < r(c) < 1$ for every $c \in [0,W]$. Then:
\begin{enumerate}[label=(\roman*)]
\item The patch tree is foliated by a one-parameter family of
    smooth fractal curve trees: for each $c \in [0,W]$, the slice
    $s_1 = c$ is a smooth fractal curve tree in the sense
    of~\cite{mulder2026}, with a common branching topology.
\item Each slice carries a Hausdorff dimension
\begin{equation}
    d_{\mathrm{slice}}(c)
    =
    \frac{\log N}{\log(1/r(c))}.
    \label{eq:dimension_spectrum}
\end{equation}
\item The function
    $d_{\mathrm{slice}} : [0,W] \to \mathbb{R}_{>0}$
    is smooth. Thus the foliation carries a smooth
    \emph{slice-dimension field}.
\end{enumerate}
\end{theorem}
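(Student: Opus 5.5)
The plan is to reduce the patch tree, slice by slice, to the curve-tree setting of~\cite{mulder2026} and then read off the dimension from a Moran-type equation. Throughout I take the inherited interface data to be transported so that the child's $s_1$-coordinate matches the parent's tip coordinate, which I read as part of the inheritance rule. More precisely, the assumption is that the transport maps $T^{(k)}$ act pointwise in $s_1$, so that $X^{(k)}_c(c,0)=T^{(k)}(X_p(c,L))$.

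\textbf{Step (i): the slices are curve trees.} Fix $c\in[0,W]$. By Theorem~\ref{thm:integrability} the state $X$ on each patch is the unique analytic solution, and restricting it to the line $s_1=c$ gives a solution of the ordinary differential equation $\partial_{s_2}X=V_2(c,s_2,X)$. The initial value of this solution is the base-interface value $X(c,0)$. Its realisation $s_2\mapsto\Pi X(c,s_2)$ is therefore an analytic curve segment, which I would identify with one branch of a curve tree in the sense of~\cite{mulder2026}.

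The tip point $X_p(c,L)$ is passed to each of the $N$ children through $T^{(k)}$. Each child then evolves under the same $V_2(c,\cdot,\cdot)$, so the slice is generated exactly like an analytic curve tree with $N$ branches per node.

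The branching topology does not depend on $c$. This is because the branching rule, namely that the tip interface splits into $N$ children, is imposed patchwise and not pointwise. Hence every slice inherits the same $N$-ary tree structure.

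\textbf{Step (ii): the dimension.} This step uses the contraction hypothesis. The progression rate at $c$ is $\rho_0(c)$, and crossing depth $L$ accumulates the longitudinal factor $A^L$. I would argue that each child branch in slice $c$ is a similar copy of its parent scaled by $r(c)=\rho_0(c)A^L$. Here I would invoke the self-similar generator structure of~\cite{mulder2026}, in which branch lengths scale geometrically under inheritance.

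The attractor of slice $c$ is then the attractor of an iterated function system of $N$ similarities, all with ratio $r(c)\in(0,1)$. To apply Moran's theorem, I need the open set condition. I would either take it from the curve-tree framework of~\cite{mulder2026}, where it holds in the contractive regime, or assume it as implicit in ``fractal curve tree in the sense of''. Moran's theorem then gives the Moran equation $N r(c)^d=1$, and hence $d_{\mathrm{slice}}(c)=\log N/\log(1/r(c))$.

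\textbf{Step (iii): smoothness.} The base interface $\Gamma_0$ is analytic, so $\rho_0$ is analytic in $c$. Since $r(c)=\rho_0(c)A^L$ is a constant multiple of $\rho_0$, it is also analytic. Because $0<r<1$ on the compact interval $[0,W]$, the quantity $\log(1/r(c))$ is bounded away from $0$. The map $c\mapsto\log N/\log(1/r(c))$ is therefore a composition of smooth (indeed analytic) functions, which gives (iii).

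\textbf{Main obstacle.} The hard part is Step (ii): justifying that the ratio at every generation is exactly $r(c)$. For this to hold, the state at the child's base in slice $c$ must reproduce the parent's base state up to the scaling $r(c)$. It would fail if the transport maps $T^{(k)}$ or the evolution mixed different $s_1$ values, for instance if $V_2$ drifts the slice coordinate. I would handle this by first checking that $V_2$ does not couple distinct $s_1$, which holds since $s_1$ is a parameter in the ODE. I would then state the scaling-invariance of the inherited progression rate as the precise hypothesis encoded by the definition of $A$ and $r(c)$. If exact similarity fails, I would fall back to bounded-distortion estimates, which still yield the same dimension formula.
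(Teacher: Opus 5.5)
Your proposal takes essentially the same route as the paper's proof: restrict to the slice $s_1=c$, identify it as a curve tree of~\cite{mulder2026} with $c$-independent $N$-ary branching, apply Moran's equation to get $\log N/\log(1/r(c))$, and deduce smoothness from that of $\rho_0$ together with $0<r(c)<1$. The paper likewise takes the exact per-generation ratio $r(c)$ and the separation behind Moran's equation on trust, and the open set condition you flag is a real requirement it leaves implicit (it fails for transparent transport, where the $N$ children coincide); note, however, that your bounded-distortion fallback would not keep the closed formula, since non-similar contractions give a pressure-zero dimension that in general differs from $\log N/\log(1/r(c))$.
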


\begin{proof}
Fix $c\in[0,W]$. The slice $s_1=c$ inherits a one-dimensional
generator system with the same branching topology as in ~\cite{mulder2026} and
using Moran's  equation~\cite{falconer2003} yields the dimension 
formula~\eqref{eq:dimension_spectrum}. 
Smoothness of $d_{\mathrm{slice}}$ follows from the smoothness
of the interface progression-rate function $\rho_0(c)$ and the
assumption $r(c)\in(0,1)$.
\end{proof}

\begin{remark}[Local versus global Hausdorff dimension]
\label{rem:local_global_dim}
The quantity $d_{slice}(c)$ is the Hausdorff dimension of the curve tree
slice through $c$. The Hausdorff dimension of the full patch-tree attractor 
depends on the collective behaviour of the entire foliation and may not always
converge. Indeed, both the contraction rate along individual patches ($x_2$ direction) and
contraction of the inherited interfaces ($x_1$ direction) must be considered.
The relationship between the dimension field and the global attractor dimension 
is an open problem.
\end{remark}

\section{Branch Interfaces}
\label{sec:interfaces}

The interface between parent and child patches is more than a geometrical
boundary. It carries the parent state $X$, including position and tangent fields
and any other state variables as defined, to be inherited by the child branches. 
It also acts as the source of child multiplicity, i.e. the number of branches 
initiated by the parent. The interface may carry a transformation function 
that modifies the inherited state that may introduce tangent and position offsets. 
The latter is not treated in the paper because we restrict ourselves to 
analytic patch trees. 
\subsection{Interface inheritance}
\label{sec:inheritance}

\begin{definition}[Branch interface and inheritance]
\label{def:interface}
The \emph{branch interface} is the tip curve
\[
    \Gamma(s_1) = \gamma(s_1, L), \qquad s_1 \in [0,W],
\]
carrying the full field of generator state $X(s_1,L)$ from parent
to child. Each child $k$ inherits the parent tip state to be their own 
base interface via an analytic \emph{modification} 
$T^{(k)}:\mathbb{R}^n\to\mathbb{R}^n$:
\begin{equation}
    X_{\mathrm{c}}^{(k)}(s_1,0)
    = T^{(k)}\!\bigl(X_{\mathrm{p}}(s_1,L)\bigr),
    \qquad s_1\in[0,W].
    \label{eq:interface_def}
\end{equation}
Transport may be \emph{transparent}: $T^{(k)}=\mathrm{id}$. Child patches
become mutually distinguishable only when a transport modification is
applied. Transport modifications do not affect the integrability of the
generator system, but for analytic patch trees they must preserve 
$C^k$-compatibility between the parent and child patches along the inherited interface.
\end{definition}

\begin{remark}[Junction geometry]
\label{rem:junction}
Apparent overlaps in the planar embedding of branching children
are interpreted as projection artefacts and do not affect the
recursive construction when each child patch is assumed to occupy
a separate geometric layer attached along the common branch
interface. In applied settings, the overlap has consequences and 
this assumption needs to be revisited.
\end{remark}

\subsection{Interface evolution}
\label{sec:evolution}

We observe that the evolution of the interfaces through the branch generations
is itself a defining feature of the patch tree.

\begin{definition}[Interface evolution operator]
\label{def:evolution}
Given a patch specification $(V,\Gamma_0)$, define the
\emph{interface evolution operator}
\begin{equation}
    E : \Gamma_0 \mapsto \Gamma_1,
    \label{eq:evolution}
\end{equation}
where
\[
    \Gamma_1(s_1)
    =
    \gamma(s_1,L)
\]
is the tip interface obtained by integrating the generator system
from the base interface $\Gamma_0$.
\end{definition}

Together with the seam transport maps $T^{(k)}$, the interface evolution operator $E$
governs the recursive geometry of the entire patch tree.

\section{Patch tree classes}
\label{sec:examples}

In this section we explore a few types of patch trees that illustrate the setup 
and construction of patch trees and points at an elementary classification.

\subsection{Interface-preserving evolution}
\label{sec:uniform}

The simplest class of patch trees fully preserves the interface across generations.
The generator fields are assumed to be independent of the interface
coordinate $s_1$:
\[
    \rho(s_1,s_2) = \rho(s_2), \qquad
    \theta(s_1,s_2) = \theta(s_2).
\]

Thus every foliation experiences exactly the same progression and rotation 
as it propagates through the patch. 

Let the base interface be the straight segment
\[
    \Gamma_0(s_1) = p_0 + s_1u,
\]
where $u \in \mathbb{R}^2$ is a unit vector. Using the basic curve tree example from 
~\cite{mulder2026} the realisation is

\[
    \gamma(s_1,s_2)
    =
    \Gamma_0(s_1)
    +
    \int_0^{s_2}
    \rho(t)
    \begin{pmatrix}
        \cos\theta(t)\\
        \sin\theta(t)
    \end{pmatrix}
    \, dt.
\]

The inherited tip interface is

\[
    \Gamma_1(s_1)
    =
    \Gamma_0(s_1)
    +
    \int_0^{L}
    \rho(t)
    \begin{pmatrix}
        \cos\theta(t)\\
        \sin\theta(t)
    \end{pmatrix}
    \, dt.
\]

Thus the interface evolution operator acts by translation:
\[
    E(\Gamma_0)
    =
    \Gamma_0
    +
    L\rho_0
    \begin{pmatrix}
        \cos\theta_0\\
        \sin\theta_0
    \end{pmatrix}.
\]
The inherited interface has exactly the same geometric form as the
base interface. The interface class is therefore preserved under
evolution.

\begin{figure}[htbp]
    \centering
    \includegraphics[width=0.7\textwidth]{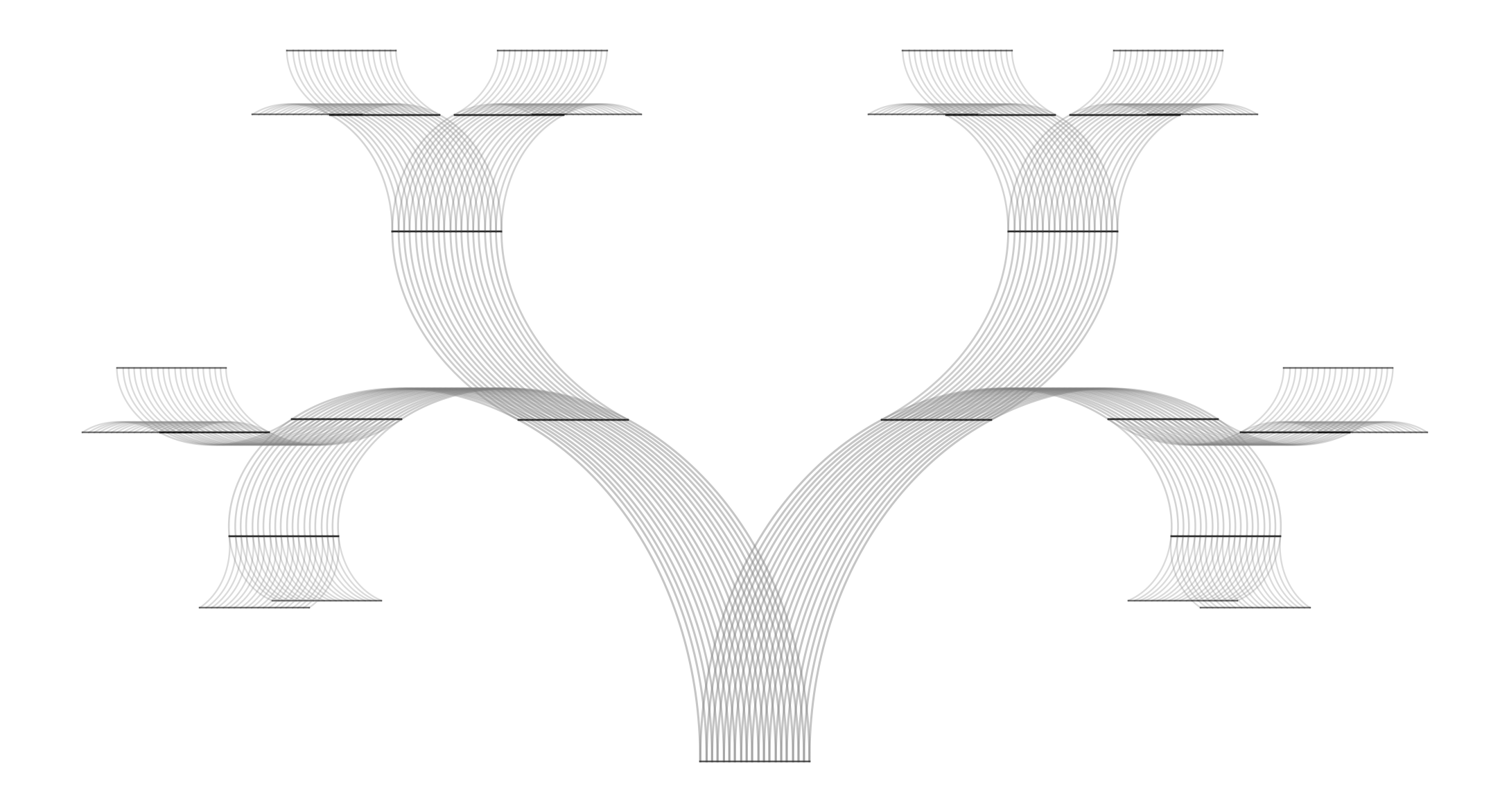}
    \caption{
        Uniform patch tree generated from generator fields
        $\rho=\rho(s_2)$ and $\theta=\theta(s_2)$, independent
        of the interface coordinate $s_1$, resulting in parallel foliations.
        The generator fields preserve the geometry of inherited
        interfaces (black horitontal lines), so that the interface evolution operator acts only
        by translation.}
    \label{fig:uniform_tree}
\end{figure}

\subsection{Interface-modifying evolution}
\label{sec:heterogeneous}

When the generator fields vary spatially, they deform inherited
interfaces as those interfaces propagate through a patch generator field.

As an example, let
\begin{equation}
\rho(s_1,s_2)
=
\rho_b(s_1)\,A^{s_2},
\qquad
\rho_b(s_1)
=
\rho_{\min}
+
(\rho_{\max}-\rho_{\min})
\sin^2\!\left(\frac{\pi s_1}{W}\right),
\label{eq:rho_nonuniform}
\end{equation}
and
\begin{equation}
\theta(s_1,s_2)
=
\theta_0+\omega s_2.
\label{eq:theta_nonuniform}
\end{equation}

where $0<\rho_{\min}<\rho_{\max}$ define a
non-uniform progression profile across the interface. $A\in(0,1)$
governs longitudinal contraction, and $\omega$ determines the rate of
rotation along the progression direction. The resulting realisation therefore 
deforms the inherited interface geometry.

The geometric realisation along the patch satisfies
\begin{equation}
\frac{\partial \gamma}{\partial s_2}
=
\rho(s_1,s_2)
\begin{pmatrix}
\cos\theta(s_1,s_2)\\
\sin\theta(s_1,s_2)
\end{pmatrix}.
\label{eq:realisation_nonuniform}
\end{equation}

Integrating from the base interface $s_2=0$ to the tip interface
$s_2=L$ gives
\begin{align}
\Gamma_1(s_1)
&=
\Gamma_0(s_1)
+
\int_0^L
\rho(s_1,s_2)
\begin{pmatrix}
\cos\theta(s_1,s_2)\\
\sin\theta(s_1,s_2)
\end{pmatrix}
ds_2
\\
&=
\Gamma_0(s_1)
+
\rho_b(s_1)
\int_0^L
A^{s_2}
\begin{pmatrix}
\cos(\theta_0+\omega s_2)\\
\sin(\theta_0+\omega s_2)
\end{pmatrix}
ds_2.
\label{eq:tip_interface_nonuniform}
\end{align}

We define the displacement vector
\begin{equation}
v
=
\int_0^L
A^{s_2}
\begin{pmatrix}
\cos(\theta_0+\omega s_2)\\
\sin(\theta_0+\omega s_2)
\end{pmatrix}
ds_2,
\label{eq:displacement_vector}
\end{equation}

then find the interface evolution operator
\begin{equation}
E(s_1)
=
\Gamma_0(s_1)
+
\rho_b(s_1)\,v.
\label{eq:interface_evolution_nonuniform}
\end{equation}.

The $\rho_b(s_1)$ perturbation field itself repeats accumatively from
generation to generation. 

\begin{figure}[htbp]
    \centering
    \includegraphics[width=0.7\textwidth]{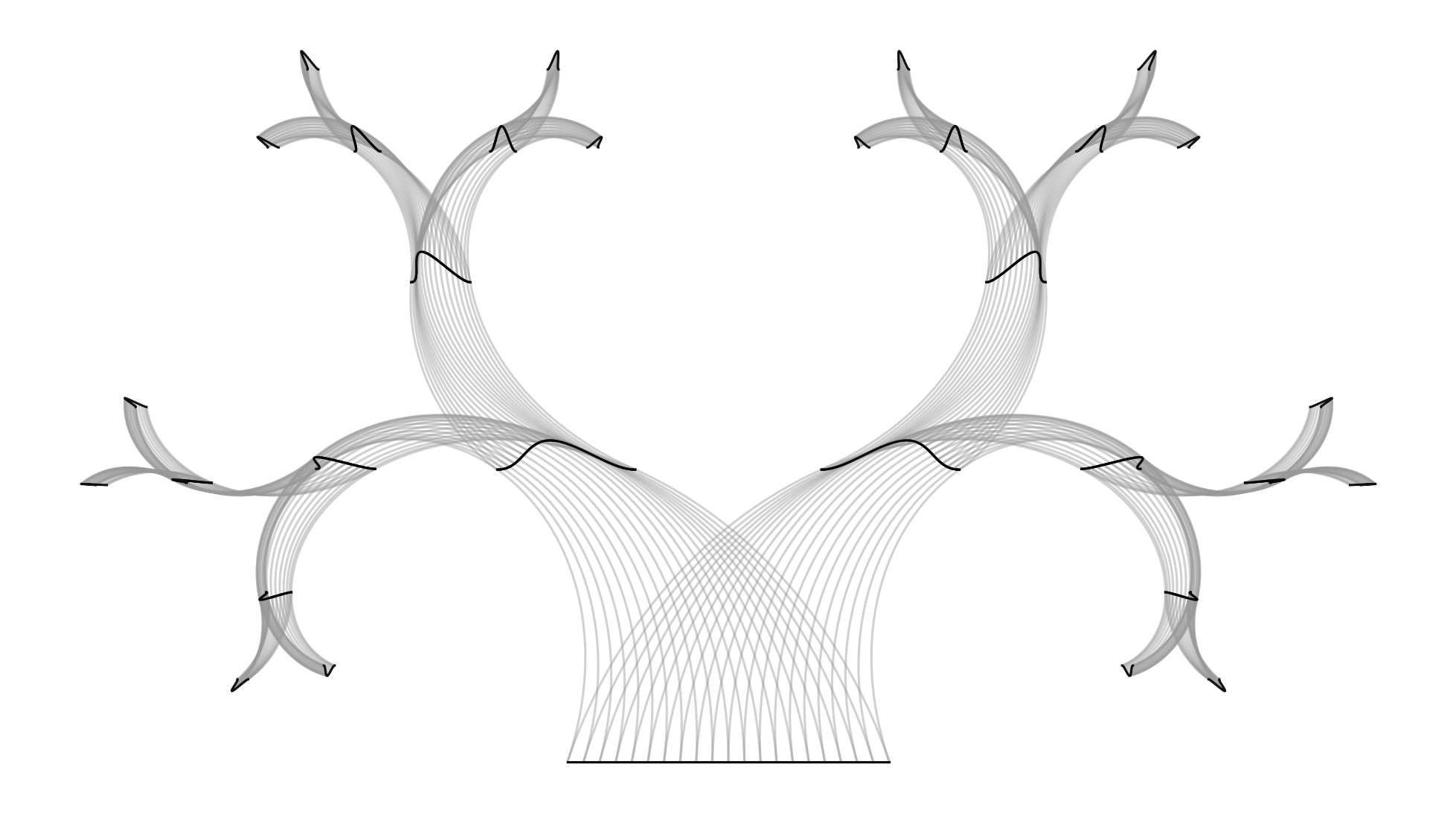}
    \caption{Patch tree with with an interface evolution operator. 
    Tip interfaces (black curves) are geometrically distinct from their base
    interfaces, increasingly modified by the $\rho_b(s_1)$ pertubation field 
    with each generation.}
    \label{fig:nondegenerate_patch_tree}
\end{figure}

\subsection{Conformal patch trees}
\label{sec:conformal_general}
We introduce conformal patch trees.
\begin{theorem}[Conformal Patch Tree Theorem]
\label{thm:conformal_patch_tree}

Let $(\rho,\theta)$ be analytic generator fields satisfying

\begin{equation}
    \frac{\partial}{\partial s_1}\log\rho
    = \frac{\partial\theta}{\partial s_2},
    \qquad
    \frac{\partial}{\partial s_2}\log\rho
    = -\frac{\partial\theta}{\partial s_1}.
    \label{eq:CR_equations}
\end{equation}

on the generator domain $D$.

Then the realised patch map $\gamma:D\to\mathbb{R}^2$
is locally conformal wherever $\rho>0$.
The induced metric takes the form

\[
g=\lambda^2 I,
\]

for some positive scalar field $\lambda$, and the
foliation directions remain orthogonal throughout the
patch realisation.

Equivalently, the generator field defines a complex-analytic
structure on the generator domain, and the realised patch
preserves angles locally.
\end{theorem}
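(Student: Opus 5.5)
The plan is to make the realisation explicit as a complex-valued map, turn the hypothesis \eqref{eq:CR_equations} into the Frobenius condition \eqref{eq:integrability} of Theorem~\ref{thm:integrability}, and then read off the metric directly. Identify $\mathbb{R}^2$ with $\mathbb{C}$ and write $\gamma=\gamma^1+i\gamma^2$. As in \eqref{eq:realisation_nonuniform}, the longitudinal generator is $\partial_{s_2}\gamma=\rho e^{i\theta}$.

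The statement does not fix the transverse generator $\partial_{s_1}\gamma$, so I will specify it. I take it to be the longitudinal direction rotated by $-\pi/2$, with the same speed:
\[
\partial_{s_1}\gamma=-i\rho e^{i\theta},\qquad \partial_{s_2}\gamma=\rho e^{i\theta}.
\]
This pair is a generator system of the form \eqref{eq:generator} on the state $(\gamma,\rho,\theta)$, and its lifted fields are analytic because $\rho$ and $\theta$ are. I expect this modelling step to be the main obstacle. One must check that this is the intended transverse field, and that the $-\pi/2$ rotation, rather than $+\pi/2$, matches the sign convention in \eqref{eq:CR_equations}. If the opposite rotation were chosen, the computation below would produce the conjugate Cauchy--Riemann system instead.

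Next I verify integrability by checking the closure condition $\partial_{s_2}\partial_{s_1}\gamma=\partial_{s_1}\partial_{s_2}\gamma$. Differentiating the two generators gives
\[
\partial_{s_1}(\rho e^{i\theta})=(\rho_{s_1}+i\rho\theta_{s_1})e^{i\theta},
\qquad
\partial_{s_2}(-i\rho e^{i\theta})=(\rho\theta_{s_2}-i\rho_{s_2})e^{i\theta}.
\]
Equating real and imaginary parts and dividing by $\rho>0$ gives exactly the two equations in \eqref{eq:CR_equations}. Hence Theorem~\ref{thm:integrability} applies, and $\gamma$ is a well-defined analytic map on a neighbourhood of each point of $D$.

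With $w=s_1+is_2$, the relations $\partial_{s_1}\gamma=f$ and $\partial_{s_2}\gamma=if$, where $f:=-i\rho e^{i\theta}$, are precisely the Cauchy--Riemann equations for $\gamma$ as a function of $w$. So $\gamma$ is holomorphic with $\gamma'(w)=\rho\,e^{i(\theta-\pi/2)}$. Equivalently, $\log\gamma'=\log\rho+i(\theta-\pi/2)$ is holomorphic, and its Cauchy--Riemann equations are \eqref{eq:CR_equations} again. This is the claimed complex-analytic structure on $D$.

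For the metric, the components are
\[
g_{11}=|\partial_{s_1}\gamma|^2=\rho^2,\qquad
g_{22}=|\partial_{s_2}\gamma|^2=\rho^2,\qquad
g_{12}=\mathrm{Re}\bigl(\partial_{s_1}\gamma\,\overline{\partial_{s_2}\gamma}\bigr)=\mathrm{Re}(-i\rho^2)=0.
\]
Therefore $g=\lambda^2 I$ with $\lambda=\rho$. Where $\rho>0$ we have $\gamma'\neq0$, so $\gamma$ is a local diffeomorphism that preserves angles. In particular, the images of the coordinate lines $s_1=c$ (the foliation leaves of Theorem~\ref{prop:foliation}) and the lines $s_2=\text{const}$ (the interfaces) meet orthogonally throughout the patch.
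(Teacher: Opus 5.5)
Your proof is correct and is essentially the paper's argument. The paper sets $\phi=\log\rho+i\theta$, notes that \eqref{eq:CR_equations} are the Cauchy--Riemann equations for $\phi$, declares the realisation derivative to be $F'(z)=e^{\phi(z)}$, and reads off $g=|F'|^2I=\rho^2I$. You are more explicit at two points: you obtain $\gamma$ from the Frobenius closure of Theorem~\ref{thm:integrability}, and your convention $\partial_{s_2}\gamma=\rho e^{i\theta}$ (matching \eqref{eq:realisation_nonuniform}) gives $\gamma=-iF+\mathrm{const}$, a rigid rotation of the paper's map. The paper's $F'=e^{\phi}$ tacitly puts $\rho e^{i\theta}$ on the $s_1$-direction, so your $-\pi/2$ sign choice is the consistent one.
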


\begin{proof}
Define the complex field

\[
\phi(s_1,s_2)
=
\log\rho(s_1,s_2)
+
i\theta(s_1,s_2).
\]

The conditions \eqref{eq:CR_equations} are precisely the
Cauchy--Riemann equations for $\phi$, implying that
$\phi$ is holomorphic.

The realisation derivative may therefore be written in
complex form as

\[
F'(z)=e^{\phi(z)},
\qquad z=s_1+i s_2,
\]

which is holomorphic and non-vanishing wherever
$\rho>0$.

A holomorphic map with non-zero derivative is locally conformal
\cite{ahlfors1979complex}. Since \(F\) is locally conformal, the induced metric on
the generator domain is

\[
g
=
|F'(z)|^2 I
=
\rho^2 I.
\]

Thus

\[
g=\lambda^2 I,
\qquad
\lambda=\rho,
\]

showing that the realised patch differs from the Euclidean
plane only by a local isotropic scaling. Consequently the
coordinate foliations remain orthogonal and angles are
preserved.

Orthogonality of the coordinate foliations follows
immediately from the diagonal form of the metric.
\end{proof}
The importance of the conformal subclass extends beyond its geometric
properties. By placing the generator fields within a complex-analytic
framework, it facilitates access to harmonic and spectral techniques that 
motivates the research into these analytical patch trees. 

\subsubsection{Canonical self-similar conformal trees}
\label{sec:conformal}

We now introduce a subclass of conformal patch trees for which the inherited 
interfaces evolve by similarity transformations.

\begin{theorem}[Canonical Self-Similar Conformal Tree]
\label{cor:canonical_selfsimilar}

Let the conformal generator field be

\[
    \phi(z)=A+kz,
\]

with $A,k\in\mathbb{C}$ constant. Then the corresponding realised
patch map satisfies

\[
    F(z+iL)
    =
    aF(z)+b,
\]

where

\[
    a=e^{ikL},
    \qquad
    b=(1-a)F_0.
\]

Consequently, the interface evolution operator acts by a similarity
transformation on the inherited interface family. The resulting
conformal patch tree is therefore geometrically self-similar, with
successive interfaces belonging to a single similarity class.

In particular, writing

\[
k=\alpha+i\beta,
\]

the similarity is contractive whenever

\[
\beta>0.
\]

\[
    F_*=\frac{b}{1-a}=F_0.
\]

\end{theorem}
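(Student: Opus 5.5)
The plan is to integrate the conformal realisation explicitly. By the proof of Theorem~\ref{thm:conformal_patch_tree}, the realised patch is the holomorphic map $F$ with $F'(z)=e^{\phi(z)}$, $z=s_1+is_2$. With $\phi(z)=A+kz$ and $k\neq 0$ this integrates in closed form:
\[
F(z)=F_0+\frac{e^{A}}{k}\,e^{kz}.
\]
Here $F_0\in\mathbb{C}$ is the integration constant, fixed by the base interface $\Gamma_0=F(\cdot+i0)$. I will treat $k=0$ separately. In that case $F$ is affine, $F(z)=F_0+e^{A}z$, the operator $E$ reduces to the pure translation of Section~\ref{sec:uniform}, and the statement (in particular $b=(1-a)F_0$ and the fixed point) should be read with $k\neq 0$. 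The key structural fact is that $F-F_0$ is an exponential, so a shift of the argument acts multiplicatively on it.

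Next I verify the functional equation. Compute
\[
F(z+iL)-F_0=\frac{e^{A}}{k}e^{kz}e^{ikL}=e^{ikL}\bigl(F(z)-F_0\bigr).
\]
Hence $F(z+iL)=aF(z)+(1-a)F_0$ with $a=e^{ikL}$ and $b=(1-a)F_0$, as claimed. Restricting to the tip, $\Gamma_1(s_1)=F(s_1+iL)=a\,\Gamma_0(s_1)+b$. By Definition~\ref{def:evolution}, $E$ is therefore the restriction of the complex affine map $w\mapsto aw+b$. This map is a rotation by $\arg a$ composed with the scaling $|a|$ about $F_0$, i.e.\ a similarity. Iterating gives $\Gamma_m=a^m(\Gamma_0-F_0)+F_0$, so all inherited interfaces lie in one similarity class. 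Under transparent transport $T^{(k)}=\mathrm{id}$, each child again satisfies the same generator equation, so the tree is self-similar.

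For the contraction criterion, write $k=\alpha+i\beta$. Then $ikL=i\alpha L-\beta L$, so $|a|=e^{-\beta L}$. Since $L>0$, $|a|<1$ if and only if $\beta>0$. When $a\neq1$, the fixed point of $w\mapsto aw+b$ is $F_*=b/(1-a)=F_0$. Thus the interfaces spiral logarithmically into $F_0$, which is the image of $s_2\to+\infty$.

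The main obstacle is bookkeeping rather than analysis. First, the constant $F_0$ must be identified correctly as the asymptotic centre. It is not a freely prescribed base point: it is determined by $\Gamma_0$ through $F_0=\Gamma_0(s_1)-\tfrac{e^{A}}{k}e^{ks_1}$. Second, the degenerate cases must be excluded explicitly. These are $k=0$, and $a=1$ (that is, $\beta=0$ and $\alpha L\in 2\pi\mathbb{Z}$), where $F_*$ is undefined. I would state these exclusions as hypotheses before the computation.
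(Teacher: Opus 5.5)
Your proof is correct and follows the paper's argument. You integrate $F'=e^{A+kz}$ to an exponential plus the constant $F_0$, observe that the shift $z\mapsto z+iL$ multiplies $F-F_0$ by $a=e^{ikL}$, and read off $|a|=e^{-\beta L}$ and the fixed point $F_*=F_0$. Your identification of the constant as $e^{A}/k$ and your explicit exclusion of the cases $k=0$ and $a=1$ are sharper than the paper: the paper leaves $D$ as a free constant, silently assumes $k\neq 0$ (where $b=(1-a)F_0$ actually fails), and divides by $1-a$ without comment.
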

\begin{proof}
Since $\phi(z)=A+kz$, we have

\[
    F'(z)=e^{\phi(z)}=e^A e^{kz}=Ce^{kz}.
\]

Integrating gives

\[
    F(z)=De^{kz}+F_0,
\]

for constants $D,F_0\in\mathbb{C}$. Therefore

\[
    F(z+iL)-F_0
    =
    De^{k(z+iL)}
    =
    e^{ikL}De^{kz}
    =
    e^{ikL}\bigl(F(z)-F_0\bigr).
\]

Hence

\[
    F(z+iL)
    =
    e^{ikL}F(z)+(1-e^{ikL})F_0.
\]

Setting

\[
    a=e^{ikL},
    \qquad
    b=(1-a)F_0,
\]

we obtain

\[
    F(z+iL)=aF(z)+b.
\]

This is a similarity transformation of the realised plane whenever
$a\neq 0$, which holds because $a=e^{ikL}$. Thus the tip interface is
similar to the base interface, and the interface evolution operator
preserves the similarity class of the inherited interfaces. Repeating
the same argument generation by generation shows that all successive
interfaces are related by iterates of the same similarity map.

If $|a|<1$, these iterates converge to the fixed point $F_*$ satisfying

\[
    F_*=aF_*+b.
\]

Solving gives

\[
    F_*=\frac{b}{1-a}=F_0.
\]

Since

\[
    |a|=|e^{ikL}|,
\]

the contraction condition is determined by the imaginary part of $k$.
For $k=\alpha+i\beta$,

\[
    ikL=i\alpha L-\beta L,
\]

and therefore

\[
    |a|=e^{-\beta L}.
\]

Thus the similarity is contractive precisely when $\beta>0$.
\end{proof}

The significance of this result is that self-similarity is not imposed
externally through branch transformations. Instead, it emerges from
the interaction between the conformal field system and the inherited
interface geometry.

\begin{remark}[Base interface requirement for conformal patch trees]
In the canonical conformal construction the base interface is not
freely specifiable. It is the image of the boundary

\[
s_2=0
\]

under the holomorphic realisation map

\[
F(z)=D e^{kz}+F_0.
\]

Consequently the base interface is

\[
\Gamma_0(s_1)
=
F(s_1)
=
F_0 + D e^{k s_1}.
\]

For

\[
k=i\beta,
\]

the interface is a circular arc centred at \(F_0\). More generally,
for

\[
k=\alpha+i\beta,
\]

the interface is a logarithmic spiral,

\[
\Gamma_0(s_1)
=
F_0 + D e^{\alpha s_1} e^{i\beta s_1}.
\]

The inherited interfaces are therefore not arbitrary, but form a
nested similarity family generated by the same conformal map. In
particular,

\[
\Gamma_n(s_1)
=
F_0 + a^n\bigl(\Gamma_0(s_1)-F_0\bigr),
\qquad
a=e^{ikL},
\]

so that all inherited interfaces belong to a single similarity class
and converge toward the fixed point \(F_0\) whenever \(|a|<1\).
\end{remark}

\begin{figure}[htbp]
    \centering
    \includegraphics[width=0.7\textwidth]{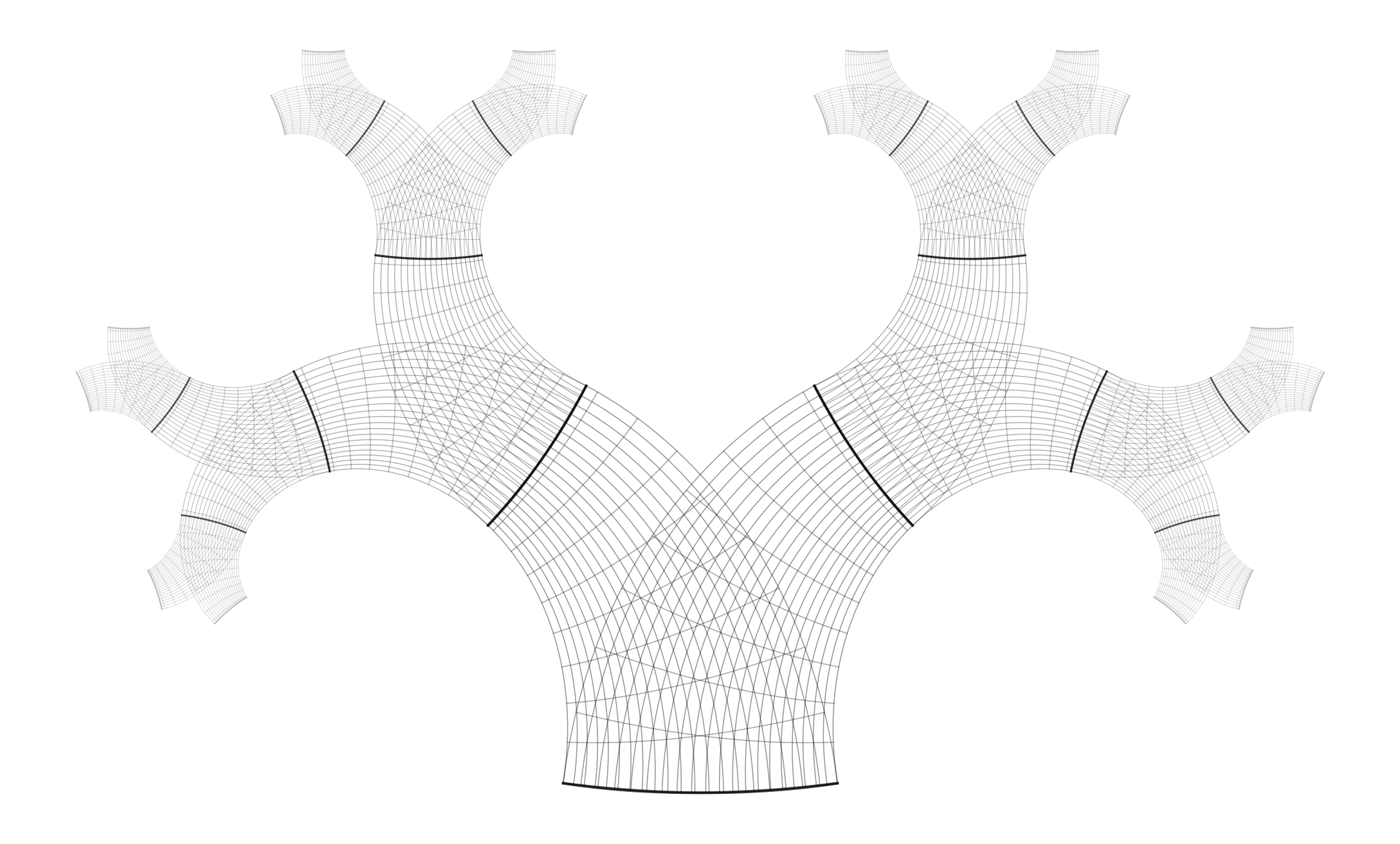}
    \caption{Canonical self-similar conformal patch tree with
    foliations that remain orthogonal throughout the realisation.
    Successive inherited interfaces are circular arcs whose radii
    decrease by the factor $e^{-\beta L}$ at each generation.}
    \label{fig:ag2_tree_uniform}
\end{figure}

\subsection{Elementary classification of patch trees}
\label{sec:classification}

The preceding examples suggest that the behaviour of the interface
evolution operator

\[
E:\Gamma_0 \mapsto \Gamma_1
\]

provides a natural basis for classifying patch trees.

The simplest class consists of interface-preserving trees, for which
\(E\) acts by translation and inherited interfaces retain their
geometric form. More generally, interface-modifying trees arise when
the generator fields deform the inherited interfaces as they propagate
through a patch. Within this broader class, conformal patch trees are
distinguished by the Cauchy--Riemann conditions on the generator
fields, giving rise to locally angle-preserving geometries. A further
subclass is formed by the canonical self-similar conformal trees, in
which the interface evolution operator acts by a similarity
transformation and successive inherited interfaces belong to a single
similarity class.

This classification highlights a central feature of the framework:
the recursive geometry is determined not solely by the generator
fields nor solely by the inherited interfaces, but by their coupled
evolution. The interface evolution operator therefore plays a role
analogous to the recursion operator in classical self-similar
constructions, providing a natural organising principle for patch
tree geometry.

\section{Towards higher-dimensional patch trees}
\label{sec:agn}

The patch tree framework is not limited to two dimensions. The
central concepts such as generator fields, Frobenius integrability,
interface inheritance, conformality, and geometric
self-similarity extend to arbitrary dimension. 

\subsection{Higher-dimensional patch realisation}
\label{sec:agn_realization}

Let $D = [0,W_1]\times\cdots\times[0,W_n]$ be an $n$-dimensional
generator domain. A patch is described by a generator state
$X : D \to \mathbb{R}^m$ satisfying
\begin{equation}
    \frac{\partial X}{\partial s_i} = V_i(s,X),
    \qquad i = 1,\ldots,n,
    \label{eq:agn_generator}
\end{equation}
where the fields $V_i$ are analytic and satisfy the Frobenius
compatibility conditions
\begin{equation}
    [V_i, V_j] = 0, \qquad 1 \leq i,j \leq n.
    \label{eq:agn_frobenius}
\end{equation}
The base and tip interfaces are $(n-1)$-dimensional manifolds and like the 
two dimensional case:
\begin{itemize}
\item carry the parent's higher dimensional field state to child branches
\item interfaces evolve via the interface evolution operator 
\item the assembled patch trees are foliations of slices across the underlying 
dimensions and 
\item the slice Hausdorff dimension field has itself dimension $(n-1)$.
\end{itemize}
\subsection{Patch dimension vs branching dimension}
\label{sec:patch_tree_dim}
The patch dimension, interface dimension, embedding dimension, and branching dimension are
independent characteristics of a recursive manifold. Different analytical regimes arise 
according to their relative magnitudes.

We observed that the choice of patch vs branching dimension is an indication 
of the problem space we are analyzing.

\paragraph{Combined regime: $d_p \approx d_b$.}
Neither recursive organisation nor patch geometry dominates: this is the domain of
recursive geometry, interface inheritance, conformality, attractor structure, and analysis 
on recursive spaces. Classical analysis on fractals naturally belongs to this regime.

\paragraph{Geometric regime: $d_p \ll d_b$.}
The recursive organisation dominates: the generator fields of individual patches serves 
primarily to drive the interface evolution. Questions focus on recursive 
topology, symbolic structure, attractors, self-similarity, and invariants of the 
branching dynamics.

\paragraph{Operational regime: $d_p \gg d_b$.}
Patch geometry dominates: the recursive structure acts as a distribution network, 
connecting complex local patch objects and the areas of interest are transport, spectral 
theory, diffusion, wave propagation, and operator dynamics.

\section{Conclusion}
\label{sec:conclusion}

This paper took the one dimensional analytic curve trees as a starting point and extended
this to surface patch trees and then to trees manifolds of arbitrary dimensions. The analyticity of these
patch trees are the foundation for everything that followed. In transitioning from curves 
to surfaces, we found that the mundane branch points in curve trees unfold into interface 
curves that mediate geometric and non-geometric data from parent surface patches to their 
children. Interface curves proved to be more than a surface patch boundary but are 
central geometric objects in their own right. They drive the topology of the patch tree.

We established the conditions for integrability and well-posedness of the generator fields 
and added conditions for conformal surface patches. Several new, and somewhat surprising 
objects emerged as we unpacked the patch trees. We found that patch trees, of any dimension, 
can be expressed as a foliation of one dimensional curve trees. Each of these foliations 
carry their own Hausdorff dimension an across the full foliation establish a smooth dimension 
field.

In exploring several examples, it became evident that the interplay between patch generators and 
their base and tip interfaces naturally produce an elementary classification. We determined the 
conditions for self-similar tree topologies, which is purely an interface dependency and the patch
generator on the other hand, solely determines whether patches are conformal.

This work indicates several directions to be explored. Beyond the self-similar and conformal
characterization, it is likely that other geometric regimes exist, such as periodic and 
quasi-periodic. We focused on symmetric interfaces but asymmetric interfaces will lead to another 
class of patch trees. Although we defined a transport modification on the parent states, 
including the introduction of discontinuities and generation dependency, we have yet to explore
that feature.

Another rich vein is the study of spectral and analytical
properties of patch trees. Classical fractal analysis seeks to define
Laplacians and related operators directly on geometrically irregular
limit sets. By contrast, conformal patch trees are assembled from
smooth recursive manifolds on which the full machinery of differential
geometry and partial differential equations is already available. The
fractal geometry emerges only as a recursive limit of these smooth
structures.

This raises the possibility that difficult analytical questions on
certain classes of fractals may be approached indirectly, and perhaps 
paradoxically, through their smooth recursive approximations. Whether 
such an approach can recover known spectral results or yield new ones 
remains an open question, but the framework provides a natural setting 
in which these questions may be explored.

\end{document}